\documentclass[a4paper,twocolumn,11pt]{quantumarticle}
\pdfoutput=1
\usepackage[utf8]{inputenc}
\usepackage[english]{babel}
\usepackage[T1]{fontenc}
\usepackage{amsmath}
\usepackage{tikz}
\usepackage{graphicx}
\usepackage{amsthm} 
\usepackage{hyperref}

\newtheorem{lemma}{Lemma}

\begin{document}

\title{Dissipative State Engineering of Complex Entanglement with Markovian Dynamics}

\author{Manish Chaudhary}
\affiliation{Institut de Physique Nucléaire, Atomique et de Spectroscopie, CESAM, University of Liège, B-4000 Liège, Belgium}
\orcid{0000-0001-7994-7851}
\email{manish.phys123@gmail.com}
%\thanks{manish.phys123@gmail.com}
\maketitle

\begin{abstract}
  Highly multipartite entangled states play an important role in various quantum computing tasks.
  We investigate the dissipative generation of a complex entanglement structure as in a cluster state through engineered Markovian dynamics in the spin systems coupled via Ising interactions. Using the Lindblad master equation, we design a projection based dissipative channel that drives the system toward a unique pure steady state corresponding to the desired cluster state. This is done by removing the contribution of the orthogonal states.  %To account for realistic imperfections, we incorporate local spin relaxation processes characteristic of solid-state or atomic ensemble systems. 
  By explicitly constructing the Liouvillian superoperator in the full $2^N$-dimensional Hilbert space, we compute the steady-state density matrix, the Liouvillian spectral gap, entanglement witness and the fidelity with respect to the ideal cluster state. The results demonstrate that the cluster state emerges as the steady state when the engineered Liouvillian dissipation dominates over the local Ising interaction between spins. Moreover, we find that the fidelity and Liouvillian spectral gap is relatively insensitive to the system size once the saturation dissipation has been achieved that scales linearly with the qubit number. This analysis illustrates a physically realizable path towards steady-state entanglement generation in the spin systems using engineered dissipation. 
\end{abstract}

\section{Introduction}

Entanglement generation and stabilization~\cite{RevModPhys.81.865,RevModPhys.80.517} constitute central challenges in quantum information processing~\cite{PhysRevLett.70.1895,PhysRevLett.68.3121,ladd2010quantum}. Preparation of entangled pure quantum states is of interest in the context of both condensed matter physics~\cite{PhysRevB.95.094302,PhysRevLett.126.040602,PhysRevB.83.094431} and quantum information~\cite{RevModPhys.86.153,RevModPhys.90.035005}. In condensed matter physics, entangled states represent ground states of the physical Hamiltonian~\cite{laflorencie2016quantum}, while in quantum information, qubit entangled states can act as a resource for quantum computing~\cite{Nielsen:2010,ladd2010quantum}. For instance, cluster states represent a special class of highly entangled multipartite graph state~\cite{PhysRevLett.86.910} which form the resource backbone for measurement-based quantum  computation~\cite{walther2005experimental,PhysRevA.68.022312,PhysRevLett.86.5188,PhysRevLett.112.120504}. It provides a novel platform for studying computational model using single qubit measurements~\cite{PhysRevLett.86.5188}, non locality~\cite{PhysRevLett.95.120405,santos2023scalable} and quantum error correction~\cite{bell2014experimental,PhysRevA.65.012308}.  The generation of cluster states has been demonstrated in various experiments using photons~\cite{yokoyama2013ultra,schwartz2016deterministic,lu2007experimental}, neutral atoms~\cite{mandel2003controlled,cooper2024graph}, superconducting qubits~\cite{PhysRevLett.97.230501,petrovnin2023generation} and trapped ions~\cite{PhysRevLett.111.210501}.  Cluster states in higher dimensions for qudits has also been proposed~\cite{reimer2019high,roh2025generation}.
%All these methods have used coherent interactions.  

Generally, entangled states are prepared through coherent unitary dynamics~\cite{RevModPhys.81.865,blatt2008entangled,PRXQuantum.5.030344} followed by projective measurements. %Another method uses adiabatic cooling to the ground entangled state of the Hamiltonian~\cite{aharonov2008adiabatic,PRXQuantum.5.020362}  and has been used to achieve entanglement in various spin systems~\cite{PhysRevLett.134.130601}. 
However, recent methods based on dissipative engineering provide an alternative route~\cite{verstraete2009quantum,lin2025dissipative}, in which, for a given many-body Hamiltonian, one can design artificial reservoirs to steer the system to a desired pure entangled steady state~\cite{PhysRevA.78.042307,verstraete2009quantum,PhysRevA.65.010101}. The interaction of a quantum system with its surrounding environment in a controlled manner is not necessarily a disadvantage. Recently it is increasingly useful in various applications in quantum information, such as preparation of ground states of stabilizer codes and long range entanglement~\cite{PhysRevResearch.2.033347,PRXQuantum.3.040337,PhysRevResearch.6.033147}, quantum error correction~\cite{PhysRevLett.133.030601} and quantum computation~\cite{verstraete2009quantum,sannia2024dissipation}.  
%Such approaches are particularly relevant in atomic ensembles, superconducting qubits, and trapped ions, where the environment can be controlled and exploited as a resource.

In this paper, we have proposed dissipative state preparation of a one-dimensional cluster state of spin qubits arranged in a linear lattice connecting with nearest neighbor interactions~\cite{PhysRevA.77.062330,PhysRevA.77.012321} in an external transverse field. We consider a model in which each qubit is weakly coupled to a memoryless reservoir that defines a Markovian dissipative regime. Various works based on Markovian dynamics~\cite{PhysRevA.78.042307} have recently been performed to generate entangled states. This include atom-cavity entanglement~\cite{PhysRevA.96.052311,greve2022entanglement} and entanglement through dissipative phase transitions~\cite{luo2017deterministic,cho2017quantum}. But none of the works have demonstrated the preparation of the special class of the cluster state yet using the dissipative engineering.
%A cluster state can be defined as the simultaneous $+1$ eigenstate of a set of mutually commuting stabilizer operators $\{K_i\}$, each involving local Pauli operations on neighboring qubits. 

In our work, we have used the interplay between Ising interactions and dissipation to generate a steady state that resembles the desired state. The Ising interaction creates entanglement in the spin chain as studied in~\cite{PhysRevLett.86.910,PhysRevA.77.062330,PhysRevA.77.012321} while dissipation directs this vital entanglement towards the cluster state correlations in a deterministic sense in the large dissipation limit. To achieve this, we have employed state-projection based construction to Lindblad operator that can be viewed as equally as picking up the right energy state corresponding to the cluster state. This approach differs from the earlier works in~\cite{PhysRevE.73.016139,gonzalez2024tutorial} where the projection operators are used to map onto correlated state of the system-bath to derive Master equation, in our work we have addressed this construction at the level of the Lindlad jump operators itself. 
%Also, in Ref. , a sequence of projection operators is used to approximate unitary evolution of the system Hamiltonian such that  
Our scheme has a potential to implement it on an experimental platform, moreover it can be scaled to generate large size cluster states. The scheme can be extended to prepare cluster state in higher dimensions and we present a prototype for square lattice cluster states as an example. 

The paper is structured as follows: in Sec. \ref{sec2} we review theoretical framework for cluster state entanglement and Markovian dissipation dynamics and introduce the model Hamiltonian. In Sec. \ref{sec3} we introduce the dissipative protocol and discuss the construction of Liouvillian such that the cluster state is the steady state for the model. In Sec. \ref{sec4} we evaluate the performance of our protocol numerically and calculate the steady states and its response in the small and large dissipation limit. Moreover we check the fidelity of the steady state with the cluster state and find its scaling with the finite system size. We also calculate Liouvillian gap and spectrum to check the emergence of the steady state. We characterize the multipartite entanglement using witness operator. In Sec. \ref{sec5} we extend the protocol to demonstrate the robust generation of square cluster state in high dimension. Then in Sec. \ref{sec6} we discuss the experimental implementation of our model and discuss its viability. Finally in Sec. \ref{sec7} we summarize our findings and provide an outlook to the proposed protocol.

%In this work, we construct and simulate a minimal four-qubit cluster state using engineered Lindblad jump operators associated with its stabilizers, together with a competing local relaxation channel typical of spin ensembles. We then examine steady-state fidelity and Liouvillian gap to understand robustness and convergence dynamics.

\section{Theoretical Framework}
\label{sec2}
In this section we review the notion of complex entanglement structure for the cluster states that has profound applications in measurement based quantum computing. We also introduce the dissipation process defined by Markovian dynamics.

\subsection{Cluster-State entanglement}
Cluster state is a special class of multipartite graph state~\cite{PhysRevA.69.062311,PhysRevLett.86.910} where the vertices correspond to the spin qubits and the edges represent Ising interactions between qubits as shown schematically in Fig. \ref{fig1}. The interaction can be implemented using controlled phase gate between the two connected qubits.

Mathematically, a one-dimensional $N$-qubit cluster state $|C_N\rangle$ is defined as 
\begin{align}
    |C_N\rangle = \left (\prod_{j,k} CZ_{jk} \right) |+\rangle ^N
    \label{eq:N-Cluster state}
\end{align}
where $j,k$ represents neighboring qubits. 
$CZ$ is the controlled-$Z$ gate such that the state of the $k^{\text{th}}$ target qubit gets a phase of $-1$ when the $j^{\text{th}}$ control qubit is set to $|1\rangle$. 
All qubits are initialized in the superposition state $|+\rangle = \frac{|0\rangle + |1\rangle}{\sqrt{2}}$ such that
\begin{align}
    |+\rangle ^N = \frac{1}{2^{N/2}} \sum_{x=0}^{2^N-1} |x\rangle
\end{align}
where $|x\rangle$ represents the product state of $N$-qubits in the computational basis.

A cluster state is a unique stabilizer state that is not changed under a set of commuting observables~\cite{PhysRevA.69.062311}, i.e., 
\begin{align}
    \mathcal{S}_j |C_N\rangle = |C_N\rangle
    \label{eq:eigenstate}
\end{align}
where the observables $\mathcal{S}_j$ are defined in terms of Pauli matrices \{$X,Y,Z$\} as,
\begin{align}
    \mathcal{S}_j = X_j \prod_{k\in N(j)} Z_k
    \label{eq:stabilizer}
\end{align}
where $N(j)$ represents number of neighboring qubits around $j$.

%Fig.  shows the entanglement structure of the cluster state with few qubits,
For instance, a one-dimensional four-qubit cluster state $|C_4\rangle$ is structured as,
\begin{align}
|C_4\rangle = \frac{1}{2}\left( |0000\rangle + |0011\rangle + |1100\rangle - |1111\rangle \right).
\label{eq:4cluster state}
\end{align}
with the stabilizers as
\begin{align}
\mathcal{S}_1 &= X_1 Z_2, \nonumber\\
\mathcal{S}_2 &= Z_1 X_2 Z_3, \nonumber\\
\mathcal{S}_3 &= Z_2 X_3 Z_4, \nonumber\\
\mathcal{S}_4 &= Z_3 X_4.
\end{align}
For a cluster state, the expectation value of each Pauli matrix at each site $k$ follows,
\begin{align}
    \langle C_N| X_k|C_N \rangle & = 0 \nonumber \\
    \langle C_N| Y_k|C_N\rangle & = 0 \nonumber \\
    \langle C_N| Z_k|C_N \rangle & = 0
\end{align}
We discuss the generation of $N$-qubit linear cluster state in Sec. \ref{sec3} using dissipative process and generalize it to the square lattice cluster state in Sec. \ref{sec5}.
\begin{figure}[t]
  \centering
  \includegraphics[width=\linewidth]{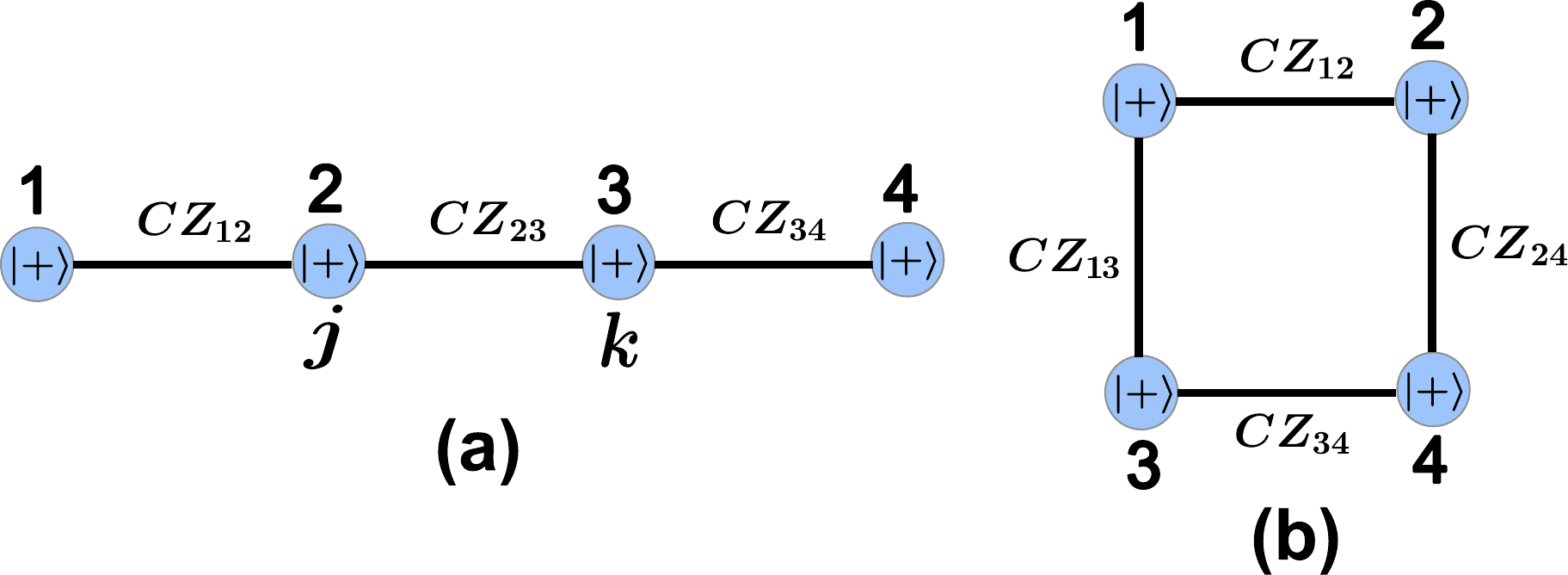}
  \caption{Schematic representation of (a) the linear cluster state in one dimension~\eqref{eq:N-Cluster state}; (b) the square lattice cluster state in two dimension~\eqref{eq:sqaure_cluster state}. Each qubit (vertex) is initialized in the superposition state $|+\rangle$ with the controlled phase CZ gate interaction (edges) between the connecting qubits j,k.}
  \label{fig1}
\end{figure}

\subsection{Physical model and Markovian dynamics}

When the system is coupled to a global memoryless reservoir, time dynamics of the system is usually governed by quantum Markov process obeying Master equation~\cite{schlosshauer2007decoherence} as
\begin{align}
\dot{\rho} = \mathcal{L}[\rho] = -i[H,\rho]+ \gamma \left( L \rho L^\dagger - \frac{1}{2}\{L^\dagger L,\rho\} \right) 
\label{eq:masterequation}
\end{align}
where $\rho$ is the system density matrix, $H$ represents the system Hamiltonian and $L$ is Lindblad jump operator with the corresponding decay rates given by $\gamma \geq 0$. We consider the dissipation rate to be identical for all qubits.

\begin{figure}[t]
  \centering
  \includegraphics[width=\linewidth]{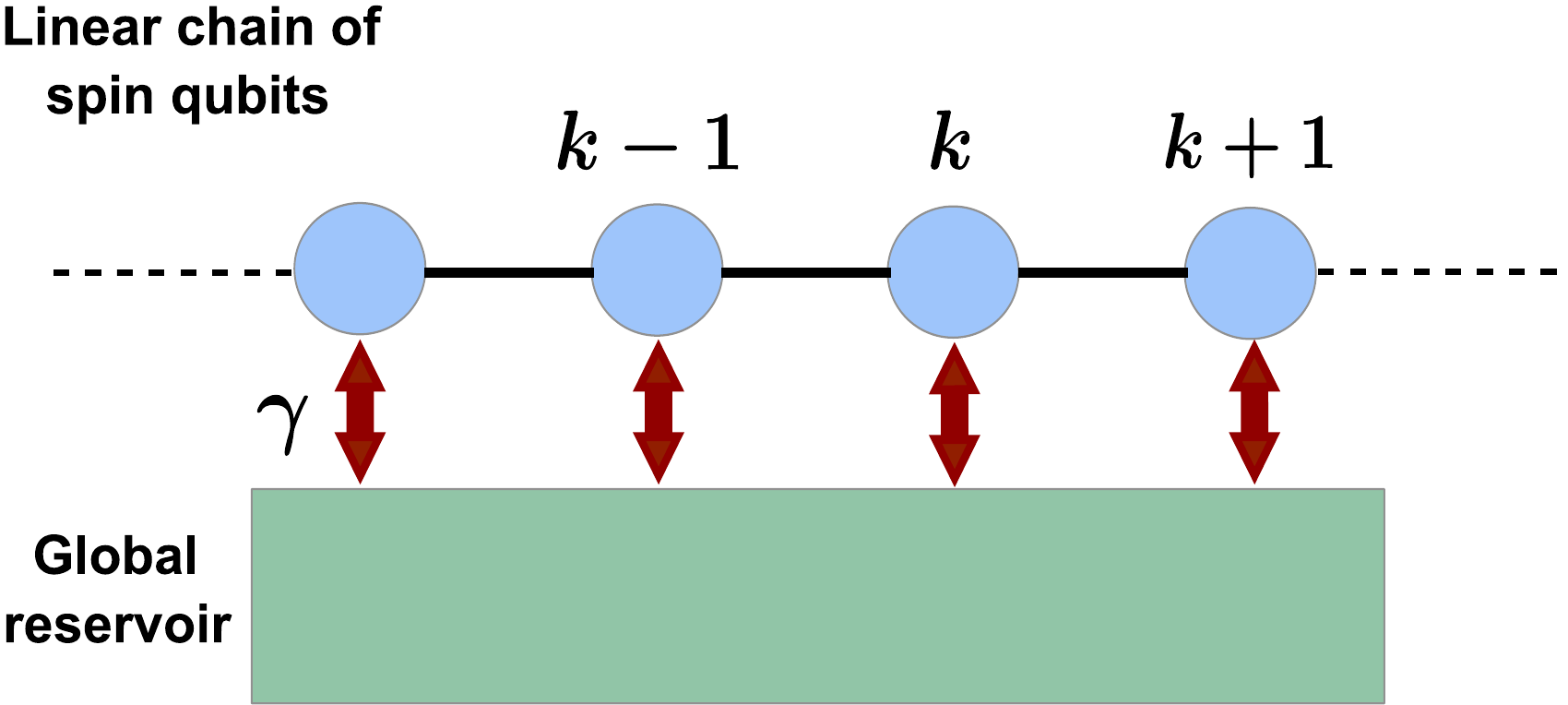}
  \caption{Dissipative model \eqref{eq:masterequation} for a linear chain of $N$ spin qubits: each k$^\text{th}$ qubit interacts with its nearest neighbors (k-1$^\text{th}$ and k+1$^\text{th}$) through Ising coupling \eqref{eq:ham} (Coherent dynamics). Each qubit is coupled to a single memoryless reservoir (dissipation dynamics) in an identical way such that the desired cluster state is the steady state of the Liouvillian \eqref{eq:lemma1}.  }
  \label{fig2}
\end{figure}
We consider a physical system of a linear chain of locally addressable $N$ spin qubits with nearest neighboring Ising interactions~\cite{PhysRevA.77.062330,PhysRevA.77.012321} in the presence of a transverse magnetic field as shown in Fig. \ref{fig2}. Each qubit is locally coupled to the dissipative environment in an identical way. The Hamiltonian is defined as,
\begin{align}
    H = g \sum_k Z_k Z_{k+1} + h \sum_k X_k
    \label{eq:ham}
\end{align}
where $g$ represents the coupling between nearest neighboring qubits and $h$ represents the magnitude for the driving pulse or transverse magnetic field. With $\gamma_k=0$, the ground state interpolates between the ordered Ising phase and the spin polarized phase by tuning the ratio $g/h$. The dynamics of this model is described in details in Ref.~\cite{PhysRevA.77.062330,PhysRevA.77.012321,PhysRevLett.99.177210} without dissipation.

In order to prepare cluster entangled state through a dissipative process, the basic requirement is to identify quantum reservoirs such that one can design system-reservoir coupling for the master equation \eqref{eq:masterequation}. Then in this case, the desired pure state of a many-body system is obtained as the unique steady state $\rho_{s}$ for the Liouvillian $\mathcal{L}$. It implies that any initial system density matrix evolves to the unique steady state density matrix in the long times,
\begin{align}
    \rho_{\text{initial}} \xrightarrow{t \to \infty} \rho_s
\end{align}

Liouvillian properties are crucial to understand the dynamics of the physical system~\cite{PhysRevA.89.022118}.
By computing Liouvillian spectra~\cite{minganti2018spectral}, one can find steady state density matrix $\rho_{s}$. Diagonalization of $\mathcal{L}$ yields its eigenvalues and eigenvectors.
%such that steady state is obtained corresponding to the eigenvector with zero eigenvalue. 
It is calculated by solving the eigenvalue equation
\begin{align}
    \mathcal{L}(\rho) = \lambda \rho
    \label{eq:eigenval}
\end{align}
One finds the steady state density matrix $\rho_s$ corresponding to the eigenvalue $\lambda_0=0$ as described theoretically in~\cite{PhysRevA.89.022118}.
With suitable choice for Lindblad operators, one can design the steady state of Liouvillian to be target state. 

Moreover we study the Liouvillian gap in the dissipative model \eqref{eq:masterequation} to study the relaxation of the dissipative system towards the steady state for a finite-size system. This is calculated from the spectral properties of Liouvillian \cite{minganti2018spectral}.  
We define the Liouvillian gap $\Delta$, as the difference between two absolute real eigenvalues of the Liouvillian matrix as,
\begin{align}
   \Delta= |\text{Re}[\lambda_1]| - |\text{Re}[\lambda_0]|
    \label{eq:liou_gap}
\end{align}
where $\lambda_1 \neq 0$ is the next lowest eigenvalue of the Liouvillian matrix \eqref{eq:eigenval}.

In the next section, we discuss the full construction of Liouvillian such that  cluster state is the unique steady state with the eigenvalue zero.

%To engineer the cluster state as the unique steady state, we choose \emph{stabilizer pumping operators}:
%\begin{align}
%L_i^{(\mathrm{pump})} = \frac{1}{2}(I - K_i),
%\end{align}
%which project onto the $-1$ eigenspace of each stabilizer.  These operators remove population from states orthogonal to the cluster subspace.

%To incorporate realistic decoherence, we include local relaxation channels on each qubit:
%
%\begin{align}
%L_i^{(\mathrm{rel})} = \sigma_i^-,
%\end{align}
%where $\sigma^- = |0\rangle\langle 1|$ represents spin relaxation in a spin ensemble.

\section{Dissipative Protocol for Entanglement generation}
\label{sec3}
In this section, we discuss the construction for Lindblad operators that can be used to generate cluster states of any dimension as a steady state.
The main idea behind this construction is to use the projection mechanism that maps any
orthogonal state onto a target state that contains cluster state like correlations between qubits originating from interactions~\cite{PhysRevLett.86.910,PhysRevA.77.062330}. Our objective is to employ Liouvillian superoperator that relaxes the system towards the cluster state only. It is motivated by splitting the Hilbert space of density matrix into two parts~\cite{gonzalez2024tutorial} so that dissipative dynamics picks up the target state space.

\begin{figure}[t]
  \centering
  \includegraphics[width=\linewidth]{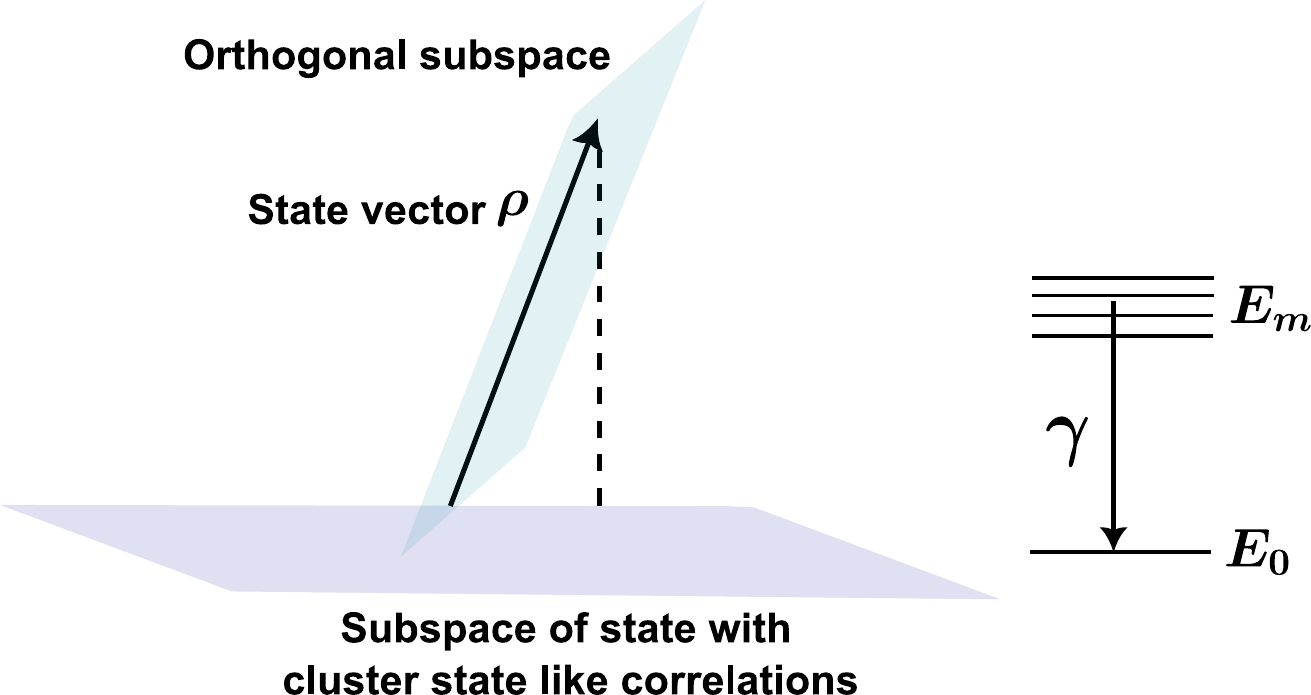}
  \caption{Schematic illustration of the construction of the Lindblad jump operator in Eq.~\eqref{eq:Lindbladope}. The dissipative dynamics projects an arbitrary state $\rho$ from the full Hilbert space onto the subspace spanned by the target cluster state. Equivalently, the action of the jump operator can be interpreted as driving the system toward a specific energy eigenstate $E_0$, as depicted in the adjacent panel, thereby stabilizing the cluster state as the unique steady state of the evolution.}
  \label{fig3}
\end{figure}
\subsection{Liouvillian construction}
If $\mathcal{H}$ is the Hilbert space of the system such that its dimension is $d=\text{dim}(\mathcal{H})$, an orthogonal complement of span$\{|C_N \rangle\}$ is a subspace of dimension $d-1$.
Let the set $\{|\phi_m\rangle\}$ represents the basis of the subspace orthogonal to the cluster state $|C_N\rangle$ such that 
\begin{align}
    \langle \phi_m |C_N\rangle =0 
    \label{eq:orthogonality}
\end{align}
and $\langle \phi_{m'}|\phi_m\rangle = 0$. There is no unique choice for $|\phi_m\rangle$, any orthonormal basis will be a sufficient choice which can be obtained using Gram–Schmidt process from the computational basis. An example for this construction in the case of two-qubit linear cluster state $|C_2\rangle$ is given in Appendix A.

In terms of the projection operators, we can define two subspaces
\begin{align}
    \hat{P}_1  & = |C_N\rangle \langle C_N | \nonumber \\
     \hat{P}_2  & = \sum_m |\phi_m\rangle \langle \phi_m |
\end{align}
so that it projects the density matrix $\rho$ onto the corresponding subspace  $\hat{P}_i \rho$.

With this, we define non-Hermitian Lindbald operators as,
\begin{align}
    L_m = |C_N\rangle \langle \phi_m|
    \label{eq:Lindbladope}
\end{align}
where the index $m$ represents the number of the basis states in $(2^{N}-1)$-dimensional orthogonal subspace.
The action of superoperator $L_m \rho$ is to pump any orthogonal component within $\rho$ into the desired state $|C_N\rangle$ as this is the unique steady state for a given Liouvillian $\mathcal{L}$ construction. This definition can also be viewed as the \textit{projection} based Lindblad construction. In terms of physical implementation, one can view this as projecting onto the energy state $E_0$ corresponding to the spin cluster state 
\begin{align}
    |E_0 \rangle \langle E_m|
\end{align}
while all the orthogonal states with energies $E_m$ are smeared out. The construction is shown schematically in Fig. \ref{fig3}.
We now list important properties of the Liouvillian. 

\begin{lemma}
    A cluster state $|C_N\rangle\langle C_N|$ is a unique steady state for the Liouvillian $\mathcal{L}$ in the strong dissipation limit.
    \begin{align}
        \mathcal{L}(|C_N\rangle \langle C_N |) = 0
        \label{eq:lemma1}
    \end{align}
\end{lemma}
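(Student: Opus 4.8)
The plan is to split the full Liouvillian into its coherent piece $-i[H,\rho]$ and the dissipative piece $\gamma\sum_m\bigl(L_m\rho L_m^\dagger-\tfrac12\{L_m^\dagger L_m,\rho\}\bigr)$, and to show that the dissipator annihilates $\rho_s=|C_N\rangle\langle C_N|$ exactly while the coherent term is suppressed in the strong-dissipation regime. First I would evaluate the dissipator on $\rho_s$ term by term. With $L_m=|C_N\rangle\langle\phi_m|$ and the orthogonality relation $\langle\phi_m|C_N\rangle=0$ of Eq.~\eqref{eq:orthogonality}, the gain term reads $L_m\rho_s L_m^\dagger=|C_N\rangle\langle\phi_m|C_N\rangle\langle C_N|\phi_m\rangle\langle C_N|=0$; since $L_m^\dagger L_m=|\phi_m\rangle\langle\phi_m|$, each product in the anticommutator $\{L_m^\dagger L_m,\rho_s\}$ carries a factor $\langle\phi_m|C_N\rangle$ or $\langle C_N|\phi_m\rangle$ and therefore also vanishes. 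Summing over $m$, the entire dissipator vanishes on $\rho_s$.

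Next I would confront the coherent term, which is the genuine subtlety: $|C_N\rangle$ is generically \emph{not} an eigenstate of the Ising Hamiltonian~\eqref{eq:ham}, so $[H,\rho_s]\neq0$ and $\rho_s$ is an exact steady state only once the Hamiltonian contribution is discarded. This is precisely the role of the strong-dissipation limit. Writing $\mathcal{L}=\mathcal{L}_H+\gamma\mathcal{D}$, where $\mathcal{L}_H$ is of order $g,h$ and $\mathcal{D}$ is of order unity, the dissipator dominates for $\gamma\gg g,h$, and in the strict limit $\gamma\to\infty$ (equivalently, after rescaling time by $\gamma$) the effective generator is $\mathcal{D}$ alone, for which $\mathcal{D}(\rho_s)=0$ reproduces Eq.~\eqref{eq:lemma1}. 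I would also note that a first-order expansion in $g/\gamma,h/\gamma$ displaces the true steady state only by $O(1/\gamma)$ from $\rho_s$, which is the expected behaviour underlying the fidelity study announced for Sec.~\ref{sec4}.

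For uniqueness I would exploit the block structure the jump operators impose. Using the completeness relation $\hat{P}_1+\hat{P}_2=\mathbb{1}$, two identities follow immediately: $\sum_m L_m^\dagger L_m=\sum_m|\phi_m\rangle\langle\phi_m|=\hat{P}_2$ and $\sum_m L_m\rho L_m^\dagger=\mathrm{Tr}(\hat{P}_2\rho)\,|C_N\rangle\langle C_N|$. Decomposing an arbitrary density matrix in $\mathrm{span}\{|C_N\rangle\}\oplus\mathrm{span}\{|\phi_m\rangle\}$ as a block matrix with scalar cluster weight $a$, coherence block $b$, and orthogonal block $C$, substitution gives a dissipator whose cluster block is $\mathrm{Tr}(C)$, whose coherence block is $-b/2$, and whose orthogonal block is $-C$. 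Demanding $\mathcal{D}(\rho)=0$ then forces $C=0$ and $b=0$, while normalization $\mathrm{Tr}\,\rho=a=1$ fixes $a=1$, so the cluster state is the unique fixed point.

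The main obstacle is not the dissipative algebra, which collapses at once by orthogonality, but making the strong-dissipation caveat precise: because $\rho_s$ fails to be a Hamiltonian eigenstate, Eq.~\eqref{eq:lemma1} holds \emph{exactly} only for $\mathcal{D}$, so I would have to state carefully in what sense the $\gamma\to\infty$ limit (or the perturbative expansion in $1/\gamma$) recovers the cluster state as steady state of the full $\mathcal{L}$, rather than asserting $[H,\rho_s]=0$ outright.
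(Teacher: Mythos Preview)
Your proposal is correct and, for the dissipator, proceeds essentially as the paper does: both of you kill the gain term via $\langle\phi_m|C_N\rangle=0$ and the anticommutator via $L_m^\dagger L_m=|\phi_m\rangle\langle\phi_m|$ (the paper additionally invokes completeness to rewrite $\sum_m|\phi_m\rangle\langle\phi_m|=I-|C_N\rangle\langle C_N|$ before evaluating the anticommutator, but this is cosmetic).

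The two genuine differences are in the Hamiltonian term and in uniqueness. For the coherent piece, the paper argues that each $Z_kZ_{k+1}$ and $X_k$ anticommutes with a stabilizer, so $H|C_N\rangle$ lands in the orthogonal subspace, and then writes $[H,\rho]\approx0$ in the strong-dissipation limit. You instead state openly that $[H,\rho_s]\neq0$ and handle the limit by rescaling time and treating $\mathcal{L}_H$ as an $O(1/\gamma)$ perturbation of $\gamma\mathcal{D}$; this is the more careful formulation of what ``strong dissipation limit'' means here, and it correctly anticipates the $O(1/\gamma)$ fidelity defect seen in Sec.~\ref{sec4}. For uniqueness, the paper appeals to positivity: from $\mathcal{D}(\rho_s)=0$ it extracts $\langle\phi_m|\rho_s|\phi_m\rangle=0$ for every $m$ and then concludes that $\rho_s$ is supported on $|C_N\rangle$ alone. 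Your block-matrix computation reaches the same endpoint by solving $\mathcal{D}(\rho)=0$ directly, with the advantage that it kills the coherence block $b$ explicitly---a step the paper leaves implicit in the positivity of $\rho_s$. Either route works; yours is slightly more self-contained, the paper's slightly terser.
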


\begin{proof}
    We start with the Eq. \eqref{eq:masterequation} and evaluate $\mathcal{L}$ on $\rho=|C_N\rangle \langle C_N|$ term-by-term, 

    For every $m$,
    \begin{align}
        L_m \rho L_m^\dagger & = L_m |C_N\rangle \langle C_N| L_m^\dagger \nonumber \\
        & = (|C_N\rangle \langle \phi_m| C_N\rangle) (\langle C_N| \phi_m\rangle |C_N\rangle) 
    \end{align}
    Using the properties for basis vectors \eqref{eq:orthogonality}, it evaluates to
     \begin{align}
        L_m \rho L_m^\dagger = 0
      \end{align}
      Also,
      \begin{align}
         L_m^\dagger L_m & = (|C_N\rangle \langle \phi_m|)^\dagger (|C_N\rangle \langle \phi_m|) \nonumber \\
         & = |\phi_m\rangle \langle \phi_m|
      \end{align}
       Therefore, the anticommutator sum is
       \begin{align}
          \sum_m \frac{1}{2}\{L_m^\dagger L_m,\rho\} & = \frac{1}{2}  \left\{\sum_m|\phi_m\rangle \langle \phi_m|,|C_N\rangle \langle C_N|\right\}
       \end{align}
Using the completeness relation,
       \begin{align}
         \sum_m |\phi_m\rangle \langle \phi_m| + |C_N\rangle  \langle C_N| = I
       \end{align}
       it follows that 
       \begin{align}
         \left\{I-|C_N\rangle \langle C_N|,|C_N\rangle \langle C_N|\right\} =0
       \end{align}
       
       Hence, the anti commutator term also vanishes as
       \begin{align}
          \sum_m \frac{1}{2}\{L_m^\dagger L_m,\rho\} =0
       \end{align}
       For the Hamiltonian \eqref{eq:ham}, as the operators $Z_kZ_{k+1}$ and $X_k$ anti commute with the stabilizers \eqref{eq:stabilizer} and hence this generates a state orthogonal to the cluster state $|C_N\rangle$.
Therefore one finds that in the strong dissipation limit $|\frac{g}{h}|<<\gamma$,
\begin{align}
    [H,\rho]  \approx 0
\end{align}
hence this completes the proof \eqref{eq:lemma1}.

To provide uniqueness of the steady state, we characterize the Kernel of the Liouvillian. 

Let $\rho_s$ be any steady state density matrix in the strong dissipation limit,
\begin{align}
    \mathcal{L}(\rho_s) = 0
\end{align}
Since each term in the Lindbladian is positive semi definite that requires
\begin{align}
    L_m \rho_s L_m^\dagger = 0 
\end{align}
for each $m$. Using \eqref{eq:Lindbladope} it is found that
\begin{align}
    \langle \phi_m | \rho_s |\phi_m\rangle = 0
\end{align}
which implies that $\rho_s$ has zero population in every state $|\phi_m\rangle$. Since the set $\{|\phi_m \rangle\}$ spans the entire subspace orthogonal to $|C_N\rangle$, so the only possibility is the occupancy in the cluster state only. Hence 
\begin{align}
    \text{dim}[\text{ker}  (\mathcal{L})] = 1
\end{align}
which is true for $|\frac{g}{h}|<<\gamma$.
However for weak dissipation, the dynamics are dominated by nearest neighbor interactions that leads to the existence of other steady states than the cluster state. 
\end{proof}

\section{Numerical simulation and performance}
\label{sec4}
In this section we analyze the performance of the proposed protocol to generate cluster state with dissipative engineering. At first we calculate mean field dynamics in the large $N$ limit and obtain stable steady states. We also calculate the dynamics of the exact time evolved state and study spin expectation values, Liouvillian gap and fidelity of the final state to check the convergence of Markovian dynamics towards cluster state preparation. We study the emergence of the critical phenomenon with the system size $N$ by performing exact time evolution of the system's density matrix (\ref{eq:masterequation}) using the developed software based on the Python package~\cite{PhysRevA.98.063815}.

\subsection{Mean field dynamics and stable steady states}
We obtain the steady state properties of this model \eqref{eq:masterequation} in the mean field (MF) approximation where the total density matrix is approximated as
\begin{align}
    \rho \approx \underset{k}{\otimes} \rho_k
    \label{eq:MF_density matrix}
\end{align}
and the product of Pauli matrix is factorized as,
\begin{align}
    \langle O_j O_k \rangle \approx \langle O_j \rangle  \langle O_k \rangle
    \label{eq:MF_observable}
\end{align}
% Moreover, we consider the dissipation rate to be identical for all qubits, i.e. $\gamma_k = \gamma$ for simplicity in \eqref{eq:masterequation}.

Mean field equations for the averaged spin observables are obtained as,
\begin{align}
    \dot{J}^x &=  -4g J^y J^z  -\gamma J^x \nonumber \\
    \dot{J}^y & =  4g J^x J^z - 2h J^x -\gamma J^y  \nonumber \\
  \dot{J}^z & =  -2h J^y -\gamma J^z
\end{align}
where  $J^\alpha = \frac{1}{N} \underset{k}{\sum} \langle \alpha_k \rangle $ for $\alpha\in \{X,Y,Z\}$ is the averaged spin value. We choose dimensionless parameters as $(h_{g} = \frac{h}{g},\gamma_{g}= \frac{\gamma}{g})$ to study the dynamics of the system.

Solution of the mean field equations can be broadly categorized into two regimes: in the absence of dissipation $\gamma = 0$, we obtain several steady states. In the limit $|h_g|>>1$ value, i.e. when the transverse field is dominant over nearest neighbor coupling, one obtains $X$-polarized phase such that 
\begin{align}
  (J^x, J^y, J^z)_{s_1} = (\pm 1, 0, 0) .
  \label{eq:stable1}
\end{align}
However when the coupling between nearest neighbor exceeds the transverse magnetic field with $|h_g|<<1$, ordered spin phase is developed with
\begin{align}
    (J^x, J^y, J^z)_{s_2} = \left(\frac{h_g}{2},0,\pm\sqrt{1- \left(\frac{h_g}{2}\right)^2}\right).
     \label{eq:stable2}
\end{align}
Depending on the sign of the coupling $g$, the steady state solutions are characterized by doubly degenerate ground states for $g<0$ as
\begin{align}
    |00\dots 0\rangle, |11\dots 1\rangle
\end{align}
and a finite $h$ lifts the degeneracy leading to a $Z_2$ broken symmetry phase in the thermodynamic limit with $ J^z \neq 0$ and $ J^x \neq 0$. With $g>0$, the degenerate ground states are characterized by 
\begin{align}
    |0101\dots \rangle, |1010\dots \rangle 
\end{align}
with $J^z=0$.

\begin{figure}[t]
  \centering
  \includegraphics[width=\linewidth]{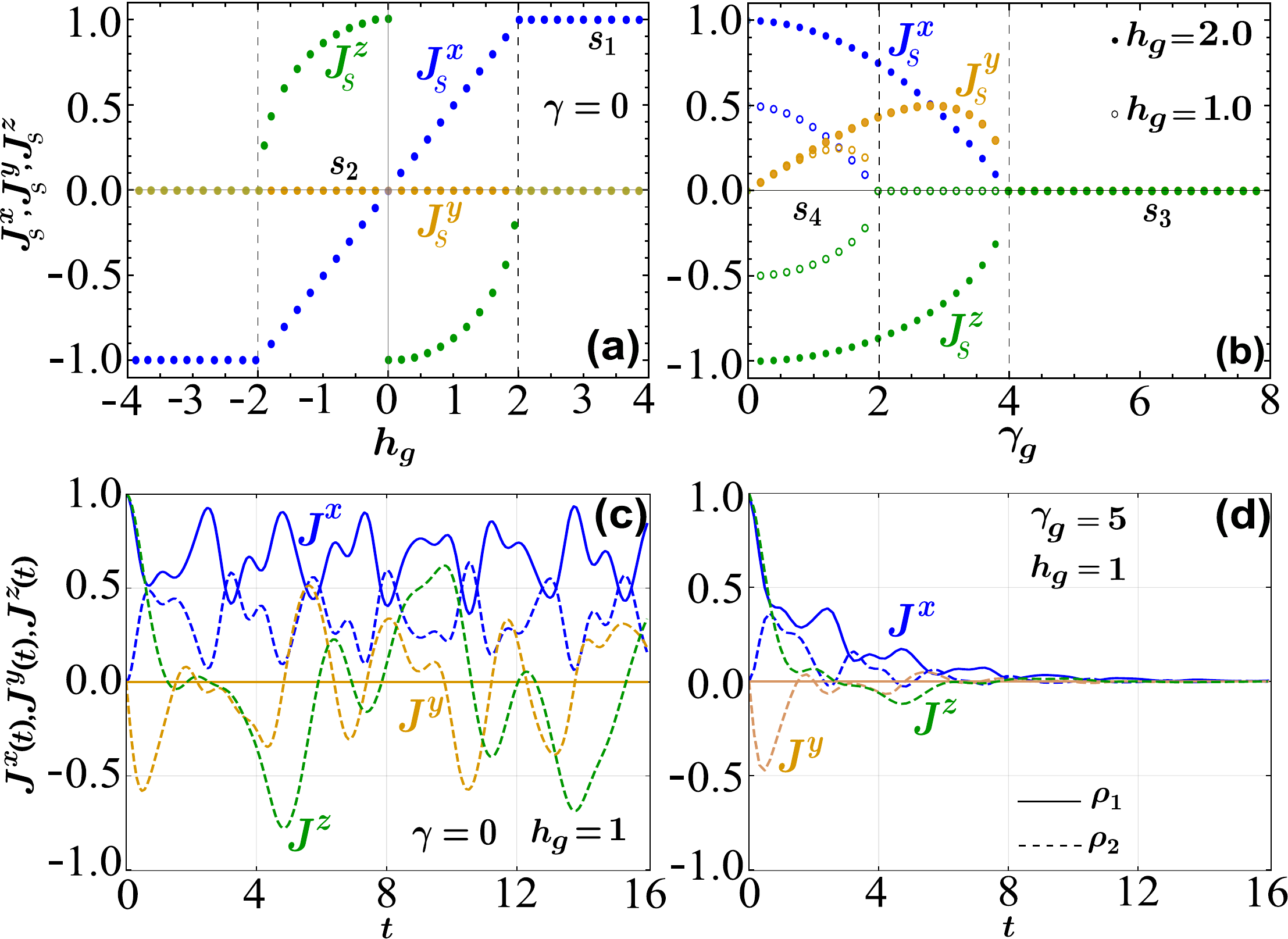}
  \caption{Dynamics of dissipative model \eqref{eq:masterequation} with Markovian environment: Variation of the steady state spin expectation values with the controlling parameters (a) no dissipation ($\gamma=0$); (b) with dissipation at a fixed value of $h_g=2.0$ (closed circle) and $h_g=1.0$ (open circle). Distinct phases are marked as $s_j$. Exact time evolution of the density matrix and spin expectation values \eqref{spin_exp_finite} as a function of time $t$ (c) no dissipation ($\gamma=0$); (d) with dissipation ($\gamma_g=5$) for system size with $N=4$ and two different initial states, $\rho_1(0) = |++++\rangle \langle ++++|$ (solid line) and $\rho_2(0) = |0000\rangle \langle 0000|$ (dashed line).  }
  \label{fig4}
\end{figure}

For a non-zero dissipation $\gamma>0$, the only stable steady state is found to be cluster state in the strong dissipation regime $\gamma_g >2h_g$ with
\begin{align}
    (J^x, J^y, J^z)_{s_3} = (0,0,0)
     \label{eq:stable3}
\end{align}
We conclude that it is not a mixed state by plotting fidelity with the exact cluster state in the next section. 

While in the region $\gamma_g <2h_g$, it follows that
\begin{align}
    (J^x, J^y, J^z)_{s_4} = \frac{\sqrt{4h_g^2-\gamma_g^2}}{8h_g}\left(\sqrt{4h_g^2-\gamma_g^2},\pm \gamma_g,\mp 2h_g\right)
     \label{eq:stable4}
\end{align}

We have plotted mean field steady state spin expectation values in Fig. \ref{fig4}(a),(b) as a function of the controlling parameters.  Fig.  \ref{fig4}(a) shows the existence of possible steady states with no dissipation dynamics. We can see that the spin expectation values $J^x,J^y,J^z$ changes continuously between fixed points \eqref{eq:stable1}-\eqref{eq:stable2}  with a non analytical behavior at $|h_g| =2$. It is noted that a spin polarized phase is generated \eqref{eq:stable1} at larger $h_g$ while an ordered spin phase appears at smaller $h_g$ that is characteristic of the stronger nearest neighbor coupling. On the other hand, dissipation dynamics follows different stable points. One can observe a transition to the cluster state properties near the transition point $\gamma_g = 2h_g$. This transition remains the same irrespective of the sign of $g$.

Fig. \ref{fig4}(c),(d) depicts the time variation of the spin expectation value which is calculated using,
\begin{align}
    J_\alpha (t) = \text{Tr}[\rho(t) J_\alpha ]
    \label{spin_exp_finite}
\end{align}
for a finite $N$ with different initial states $\rho_i(0)$. For zero dissipation case, we observe a finite oscillations in the spin averaged values because of the coherent dynamics. For longer evolution times, it resembles mean field results. In the presence of dissipation, we can see that the system dynamics follow steady state properties for longer time evolution. For larger dissipation and critical coupling, we find that the exact time evolution leads to the cluster state properties satisfying $J^x=J^y=J^z=0$ irrespective of the initial state as seen in Fig. \ref{fig4}(d).

\subsection{Fidelity}
We define the fidelity as the overlap of the steady state density matrix with the cluster state originating from the dissipative process \eqref{eq:masterequation}  
\begin{align}
    F = \frac{\langle C_N|\rho_s|C_N \rangle }{\text{Tr}(\rho_s)}
    \label{eq:fidelity}
\end{align}
Fidelity is $1$ when the steady state is the same as the cluster state.

Fig. \ref{fig5}(a) illustrates the fidelity of the steady state with respect to the target cluster state $|C_N\rangle$ as a function of the dissipation parameter for different system sizes $N$. It is found that for relatively weak dissipation $\gamma_g<<1$, the fidelity remains low as the system dynamics is dominated by the nearest neighbor interactions \eqref{eq:ham} driving it to a different steady state that differs from the cluster state. As $\gamma_g$ is increased, the fidelity rises gradually, reflecting the influence of the engineered dissipation. Beyond a characteristic threshold, referred to as the saturated dissipation strength $\gamma_{\text{sat}}$, the fidelity of the state is maximum $F_\text{sat}$ and exhibits no further increase upon increasing $\gamma_g$. This is remarked by dissipation dominated regime in which the state preparation takes place towards the unique cluster state only. 
%.
\begin{figure}[t]
  \centering
  \includegraphics[width=\linewidth]{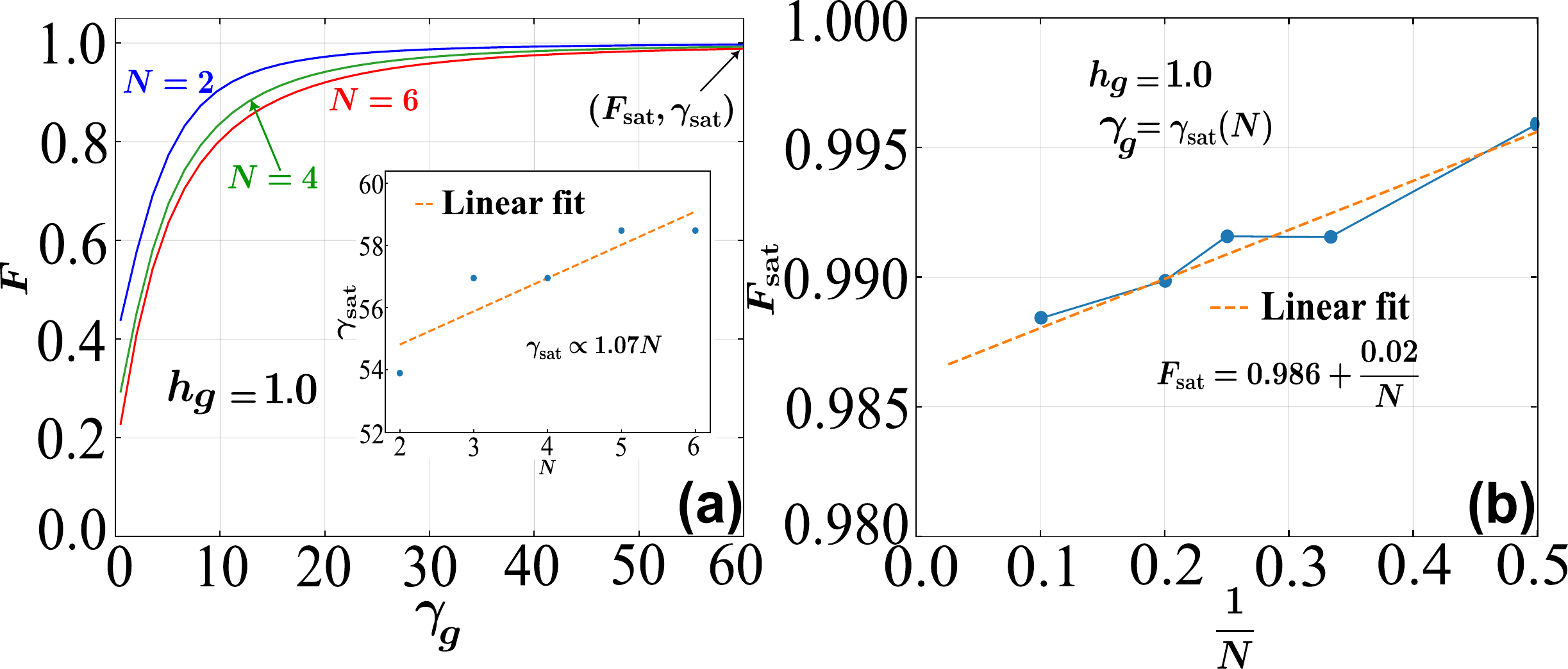}
  \caption{Fidelity \eqref{eq:fidelity} and its scaling with the system size $N$ for the dissipative model \eqref{eq:masterequation} under Markovian dynamics: (a) variation with the dissipative parameter $\gamma_g$ with varying $N$. A zoomed-in plot is shown for the variation of the saturated dissipation $\gamma_{\text{sat}}$ with varying $N$ in the inset capturing a linear scaling; (b) the maximum of the fidelity $F_{\text{sat}}$ is plotted as a function of the system size $N$ showing a power law scaling ($\propto N^{\beta}$). The system parameters are chosen as specified in each panel. For all cases, we choose $h_g=1$. }
  \label{fig5}
\end{figure}

It is observed that with increasing $N$ requires stronger dissipation to stabilize the state such that 
\begin{align}
    \gamma_{\text{sat}}(N+1)>\gamma_{\text{sat}}(N)
\end{align}
To gain maximum fidelity with the cluster state one needs comparatively larger value of $\gamma_{\text{sat}}$ with larger $N$. This happens because of the larger Hilbert space associated with the increasing $N$. We study dependence of the parameter $\gamma_{\text{sat}}$ with the system size $N$ in the inset plot in Fig. \ref{fig5}(a). Using power law we fit the straight line such that
\begin{align}
\gamma_{\text{sat}} & \propto N \nonumber \\
    \gamma_{\text{sat}} & = \gamma_0 + 1.07 N
\end{align}
where $\gamma_0 = 52.68$ is a constant value.

It is technically more logical to deal with the maximum fidelity. In Fig. \ref{fig5}(b) we have plotted the maximum of the fidelity against the system size $N$. Fidelity is larger for small system size at a given $\gamma_{\text{sat}}$. We provide a scaling with $1/N$ by
showing a straight line fit using a power law
\begin{align}
     F_\text{sat} = F_0 + \frac{0.020}{N}
\end{align}
where $F_0 = 0.986$ is an asymptotic value corresponding to the limit $N\to \infty$. We conclude that the maximum fidelity is relatively insensitive to the system size once the limit of the saturation dissipation has been achieved. This shows the potential of our dissipative protocol in the context of the scalability with the number of qubits.

\subsection{Entanglement witness}
To detect multipartite entanglement in the steady state density matrix originating from the dissipative process \eqref{eq:masterequation}, we calculate entanglement witness measure as,
\begin{align}
    W = \eta I_N -|C_N\rangle \langle C_N|
    \label{eq:witness}
\end{align}
where $I_N$ is the identity matrix of dimension $N$ and $\eta=\underset{\rho_{\text{sep}}}{\text{max}}\langle C_N|\rho_{\text{sep}}|C_N\rangle$ is the maximum overlap of the cluster state with any separable state as introduced in Ref.~\cite{PhysRevLett.92.087902}. For a linear cluster state, $\eta=\frac{1}{2}$.

The witness expectation value is defined as,
\begin{align}
    \langle W \rangle = \text{Tr}(W\rho_s)
     \label{eq:witness_expectationvalue}
\end{align}
We choose the above definition in the context of the cluster state as used in Ref.~\cite{lu2007experimental} such that $\langle W \rangle\geq0$ implies separability and $\langle W \rangle<0$ indicates the presence of the multipartite entanglement. Maximum negative expectation value $\langle W \rangle = -\frac{1}{2}$ is attained for the ideal cluster state $|C_N\rangle$.

Fig. \ref{fig6} depicts the variation of the witness expectation value as a function of the dissipation parameter $\gamma_g$ with varying $N$. We observe that for low dissipation $\gamma_g<<1$, the positive witness expectation value implies a separable state and hence no many-body entanglement exists in the steady state. With increasing dissipation $\gamma_g>>1$, the dynamics allows the generation of entangled state which is indicated by the onset of the negative witness value. Near the saturation value $\gamma_{\text{sat}}$, one observes the maximum negative expectation value of the witness operator that asymptotically converges to value $-\frac{1}{2}$ corresponding to the cluster state.
We donot observe appreciable change in the witness expectation value with different system size $N$ once the dissipation value has been reached to the saturation limit.

\begin{figure}[t]
  \centering
  \includegraphics[width=\linewidth]{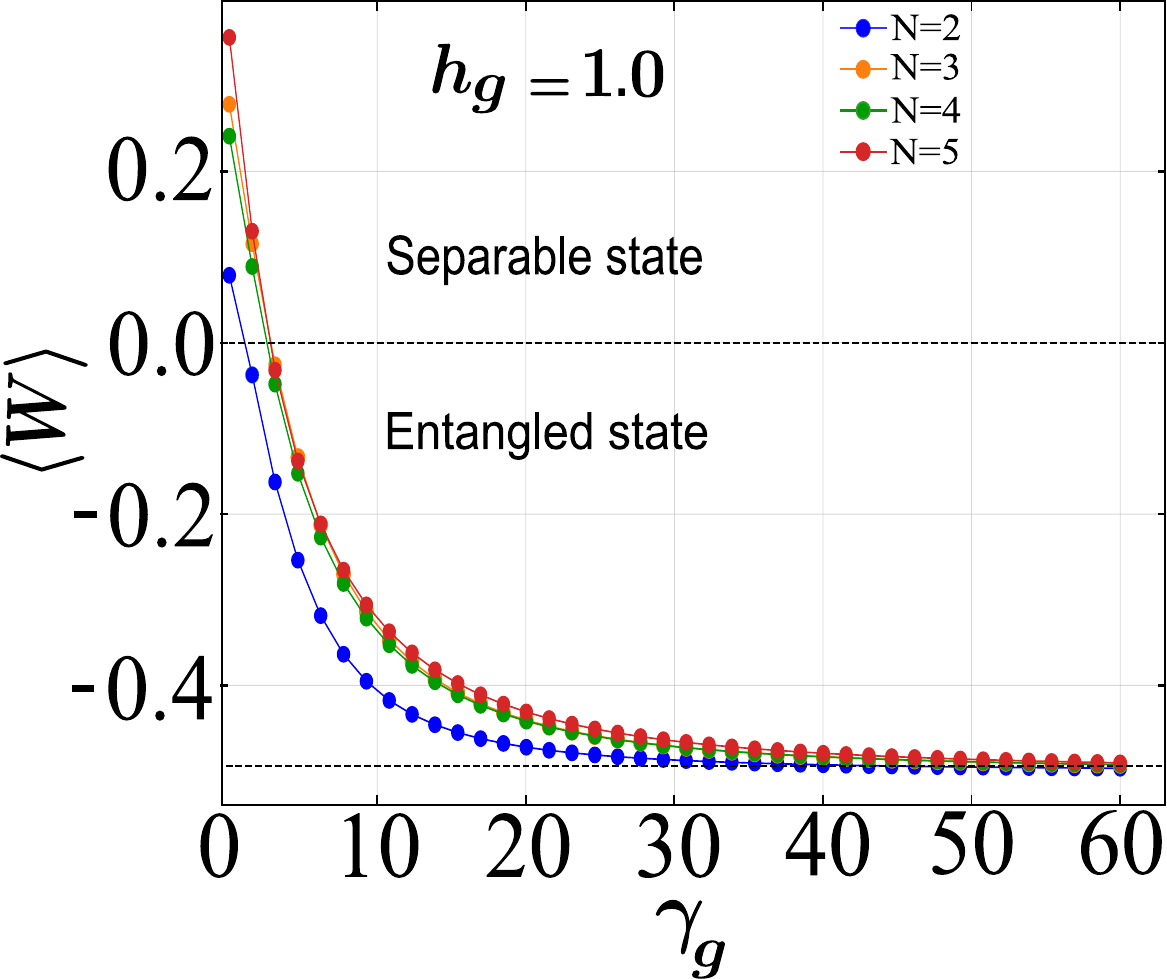}
  \caption{Variation of the expectation value of the entanglement witness \eqref{eq:witness_expectationvalue} as a function of the dissipative parameter $\gamma_g$ with different system size $N$ for the dissipative model \eqref{eq:masterequation} under Markovian dynamics. Dashed line at $\langle W \rangle = 0$ separates the region between separable and entangled states. Multipartite correlations for the cluster state is attained for the large dissipation regime $\gamma_g=\gamma_{\text{sat}}$ at which the witness expectation value saturates to the asymptotic value $-\frac{1}{2}$. We choose $h_g=1$. }
  \label{fig6}
\end{figure}

\subsection{Liouvillian gap}
The properties of the steady state density matrix can be obtained from the Liouvillian spectrum~\cite{minganti2018spectral}. For instance, we have calculated Liouvillian gap \eqref{eq:liou_gap} to reflect about the relaxation dynamics and nature of the transition among different steady states.

\begin{figure}[t]
  \centering
  \includegraphics[width=\linewidth]{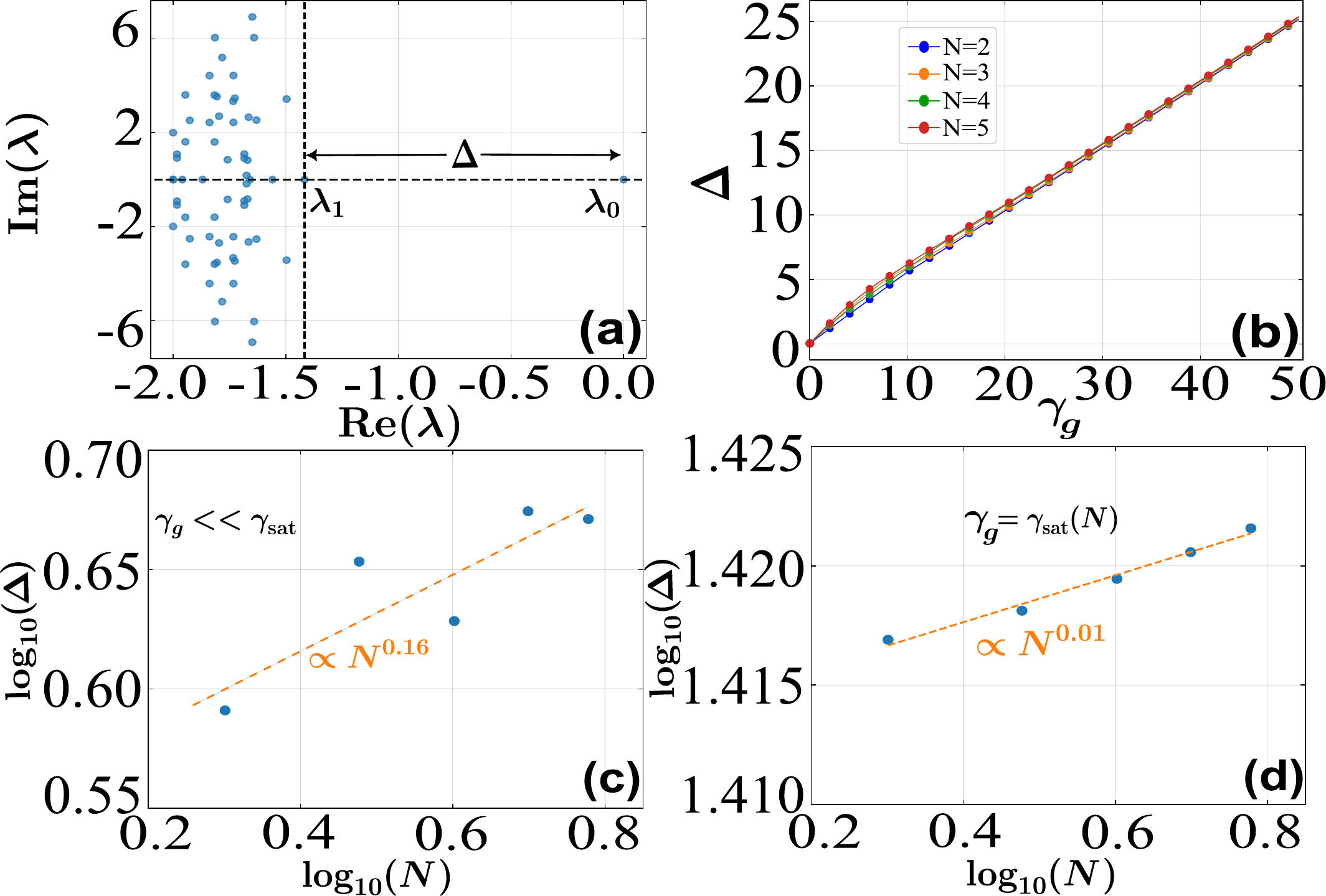}
\caption{Study of the spectral properties of the Liouvillian matrix~\eqref{eq:masterequation} and its scaling with the system size for the dissipative model under Markovian dynamics: (a) Eigenvalue spectrum~\eqref{eq:eigenval} of the Liouvillian matrix in the negative complex plane; (b) Variation of the gap~\eqref{eq:liou_gap} with the dissipative strength $\gamma_g$ for various system size $N$;  (c)-(d) Liouvillian gap is plotted as a function of the system size $N$ in a $\log{}$-$\log{}$ plot showing a power law scaling of the Liouvillian gap ($\propto N^{\beta}$) for weak and strong dissipation cases respectively. The system parameters are chosen as specified in each panel. For all cases, we choose $h_g=1$. }
  \label{fig7}
\end{figure}
Fig. \ref{fig7}(a) shows the eigenvalue spectrum for the Liouvillian \eqref{eq:masterequation} defined by Lindblad jump operators \eqref{eq:Lindbladope}. Eigenvalues exist in symmetrically placed complex conjugate pairs ($\text{Re}(\lambda),\pm\text{Im}(\lambda)$) and there is a unique steady state corresponding to the eigenvalue $\lambda_0$. One striking feature is that the eigenvalues are not cluttered near $\lambda_0$ implying a large gap. In Fig. \ref{fig7}(b), one observes that Liouvillian spectral gap $\Delta$ increases with the dissipation parameter ensuring faster approach to the target state and it remains finite even in the thermodynamic limit. We note the similar behavior in Liouvillian gap in Ref.~\cite{PhysRevB.109.064311} that has described it as a generic feature of chaotic open many-body systems. Moreover, one can see that the Liouvillian gap increases with $N$ slightly at a sufficiently small $\gamma_g<<\gamma_{\text{sat}}$, but is almost independent of $N$ for larger $\gamma_g\approx\gamma_{\text{sat}}$.

To characterize the dependence of the spectral gap $\Delta$ with the system size, we provide a scaling with $N$ by showing a straight line fit using a power law
\begin{align}
\Delta & \propto N^{\beta} \nonumber \\
    \log_{10}{\Delta} & = \beta \log_{10}{N} 
    \label{scaling_gap}
\end{align} 
In the low dissipation case $\gamma_g<<\gamma_{\text{sat}}$, the exponent has value $\beta=0.16$ remarking slow variation with $N$ as shown in Fig. \ref{fig7}(c). While, for the limiting case $\gamma_g\approx\gamma_{\text{sat}}$, it has further weaker dependence with $N$ where the exponent value is calculated as $\beta=0.01$ implying that the Liouvillian gap is insensitive to the system size (Fig. \ref{fig7}(d)) and there is no closing in the thermodynamic limit. This size-independent gap has interesting features~\cite{kastoryano2013rapid} such as a fast relaxing rate towards a fix steady state and absence of the long-range correlations in the other Liouvillian modes. This argument further supports the uniqueness of the stable cluster state that emerges in our dissipative model~\eqref{eq:masterequation} and  is a consequence of the global dissipation.

\section{Extension to the dissipative entanglement generation in high dimension}
\label{sec5}
In this section, we extend the idea of the dissipative model to generate cluster state in high dimension. As a prototype, we show the case for square lattice cluster state in two-dimension (2D) as this is a $\textit{universal}$ source for measurement based quantum computation that the linear cluster state in one dimension (1D) lacks. Various works on the generation of 2D cluster state have been proposed~\cite{PhysRevA.82.022331,larsen2019deterministic}. 

Mathematically, 2D square lattice cluster state is defined in a similar way~\eqref{eq:N-Cluster state} with the edges representing nearest neighbor interactions between four qubits ($1,2,3,4$) as 
\begin{align}
    \{(1,2), (1,3), (2,4), (3,4)\}
\end{align}
This can be visualized as two coupled linear cluster state placed side by side vertically as shown in Fig. \ref{fig1}(b) such that 
\begin{align}
    |C_{2\times 2}\rangle = CZ_{12} CZ_{13} CZ_{24} CZ_{34} |++++\rangle
    \label{eq:sqaure_cluster state}
\end{align}
with the full stabilizer group~\eqref{eq:stabilizer} 
\begin{align}
    \{X_1Z_2Z_3,X_2Z_1Z_4,X_3Z_1Z_4,X_4Z_2Z_3\}.
\end{align}
Using the same projection based Lindblad operator technique ~\eqref{eq:Lindbladope}, we prepare $4$-qubit square lattice cluster state as a unique steady state of the Liouvillian matrix.

We verify the multipartite correlations in the steady state density matrix of $4$-qubit square lattice cluster state using entanglement witness~\eqref{eq:witness}.  Fig. \ref{fig8}(a) shows the response of the witness expectation value with the dissipation. The variation follows the similar trend as observed in the case of a linear cluster state, i.e., for weak dissipation $\gamma_g<<1$, the witness remains positive, indicating a separable or weakly correlated steady state. As $\gamma_g$ is increased, $\langle W \rangle$ crosses zero and becomes negative, signaling the onset of genuine multipartite entanglement. We note that increasing dissipation stabilizes the steady state towards $4$-qubit square lattice cluster state only with a asymptotic negative value $\langle W\rangle = -\frac{1}{2}$. 

We have calculated the fidelity~\eqref{eq:fidelity} of the steady state density matrix with respect to the ideal 2D cluster state~\eqref{eq:sqaure_cluster state} as a function of the dissipation strength as shown in Fig.  \ref{fig8}(b). The fidelity increases monotonically with $\gamma_g$ starting from a moderate value at weak dissipation. The fidelity is maximum $F \sim 1$ as the limiting value of the dissipation is achieved at $\gamma_g=\gamma_{\text{sat}}$. This confirms that dissipative dynamics actively purify the steady state toward the target 2D cluster state only.
The negative entanglement witness and the near-unity fidelity at large $\gamma_g$ show the applicability of our protocol in generating high dimensional cluster state.

\begin{figure}[t]
  \centering
  \includegraphics[width=\linewidth]{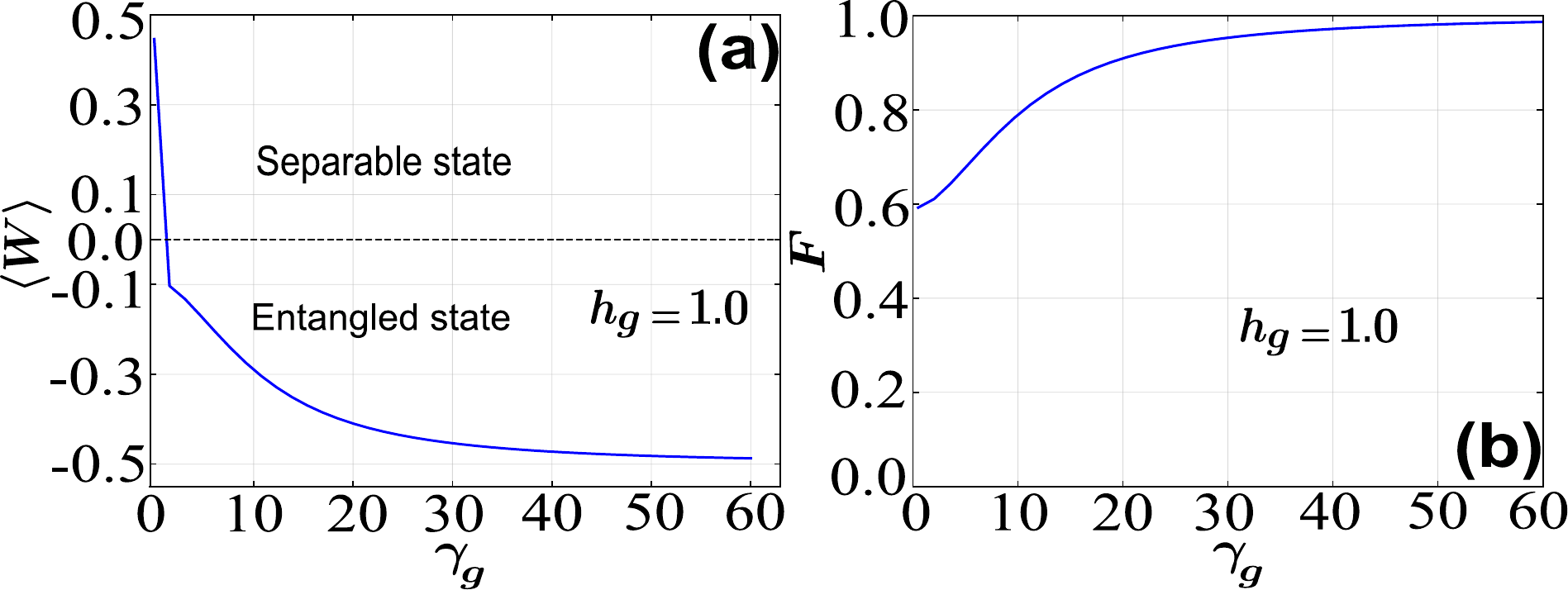}
  \caption{Variation of (a) entanglement witness~\eqref{eq:witness_expectationvalue} and (b) fidelity~\eqref{eq:fidelity} of the steady state with respect to the ideal 2D cluster state~\eqref{eq:sqaure_cluster state} as a function of the dissipation strength for the dissipative model \eqref{eq:masterequation} under Markovian dynamics. We choose $h_g=1$.}
  \label{fig8}
\end{figure}

\section{Experimental realization}
\label{sec6}
In this section, we discuss the viability of our dissipative protocol in terms of its experimental implementation. A potential choice for physical system to simulate our protocol is highly controllable trapped ion system of Yb${^+}$ or Sr${^+}$ ions~\cite{RevModPhys.93.025001}. The hyperfine ground states of the ions~\cite{jamesQuantumDynamicsCold1998} is encoded as an effective qubit such that each qubit state can be initialized in a particular spin configuration using optical pumping~\cite{RevModPhys.44.169}. Ion-ion interactions can be tailored by coupling the ions with a collective vibrational mode or phonon leading to a spin dependent force~\cite{PhysRevLett.94.153602}. There have been various experimental demonstrations to implement Ising interactions with trapped ions in an external magnetic field~\cite{PhysRevLett.134.050602,britton2012engineered}. This system is a versatile platform to study dissipative effects~\cite{PRXQuantum.3.010347,PhysRevLett.128.080502,PhysRevLett.115.200502} on the ion state.

The major challenge is to design a dissipative channel that implements Lindblad operator~\eqref{eq:Lindbladope} i.e., to obtain a dissipative map that moves the population from the orthogonal eigenspace into the desired eigenspace of the Liouvillian matrix $\mathcal{L}$.  As we have employed a state projection based construction, in principle any dissipative bath, that can alter the spin states of the ions, is a viable choice~\cite{PhysRevLett.115.200502}. One can design ancilla mediated autonomous pumping that flips the ancilla qubit conditional on the system being in the  orthogonal eigenspace of $\mathcal{L}$. The ancilla qubit is coupled to a lossy bath so that when it is flipped it decays back to the ground state while performing a corrective action on the system via interaction or conditional unitary $U$. One can apply a sequence of $U$ so that after ancilla decay, the system ends in the corrected subspace provided ancilla decay rate is larger than the system qubit decay rates. If the ancilla qubit flips on the undesired subspace, the ancilla decay is accompanied by an operation that corrects the system (correction conditioned on ancilla state). By adiabatic eliminating ancilla qubit, one obtains an effective Lindblad jump operator on the system of the form~\eqref{eq:Lindbladope}.  This ensures a Markovian dissipative environment for the system that we have considered. 

Based on the size of the linear or square cluster state, the dissipative reservoir can be engineered that couples with all the ions in an identical way leading to the same decay process. 

%At large ratios $\gamma_p / \gamma_r \gg 1$, the steady state $\rho_{\text{ss}}$ approaches the ideal cluster state with fidelity $F \approx 1$.  As $\gamma_r$ increases, competition between engineered and intrinsic dissipation drives the system toward mixed steady states with reduced entanglement, lowering $F$.
%The Liouvillian gap $\Delta$ determines the convergence rate to the steady state.  Our results show that $\Delta$ increases with $\gamma_p$, ensuring faster approach to the target state.  When local relaxation dominates ($\gamma_r > \gamma_p$), the gap narrows and the system converges to a nearly separable mixed state.
%This trade-off quantitatively illustrates the stability regime of dissipative cluster-state preparation in spin ensembles:  high-fidelity preparation requires stabilizer pumping to exceed background relaxation by at least one order of magnitude.

\section{Conclusion}
\label{sec7}
We have introduced a dissipative protocol based on Markovian evolution of multipartite system that have a desired cluster state as it's unique stable steady state.  This theoretical analysis considers the interplay between coherent interactions set by Ising couplings between qubits and the global dissipation acting on the qubits. Dissipation serves as an efficient resource for preparing and stabilizing highly entangled cluster states. We design Lindblad jump operators~\eqref{eq:Lindbladope} that removes the contribution of the orthogonal subspace by pumping the population into the desired cluster state. This approach splits the Hilbert space into two parts such dissipation process allows the convergence towards the target structure.  To perform theoretical analysis on the stable steady states for the dissipative model~\eqref{eq:masterequation}, we perform mean field calculation in the thermodynamical limit that factorizes the product of the spin operators. It predicts the existence of two types of steady states such that the cluster state emerges as the unique steady state of the engineered Liouvillian in the large dissipation limit.

Our numerical result for finite system size reveals that the fidelity approaches the asymptotic unit value, indicating a large overlap of the steady state with the target linear cluster state,   with the increasing dissipation strength at the critical spin-spin coupling. Similarly, the multipartite correlations of the steady state of the Liouvillian resembles with that of the ideal cluster state in the large dissipation limit. This is further supported by the finite Liouvillian gap that doesnot close in the large $N$ limit. We also note that at the threshold value $\gamma_{\text{sat}}$ that scales as $\sim N$, all these quantities are independent on the size of the cluster state offering insights into practical realization of the steady-state entanglement generation in the high dimensions.  
%highlighting the effectiveness of our dissipative protocol to create high dimensional cluster states. 

In our work, we construct Lindblad operator so as to pump the population into the target steady state that has practical implementation. It can be visualized as the mapping on to the particular spin or energy state.  However it requires $2^N-1$ number of such jump operators to define the full dynamics, though with larger $N$, one can always work with the subset of such operators to determine the dissipative dynamics. An alternative choice is to use the local Lindblad operators constructing from the  stabilizers~\eqref{eq:stabilizer} of the cluster state, i.e. $L_m=(I-\mathcal{S}_m)/2$ that offers better scalability $\sim N$ at the price of reduced fidelity. This produces dephasing in the stabilizer eigenbasis that leads to pumping into the undesired stabilizer subspace. Moreover these operators constitute the appearance of many steady states including the maximally mixed state as well.

%which are subjected to the dissipative 
%We have demonstrated a dissipative approach to preparing a four-qubit cluster state using stabilizer-based Lindblad operators under Markovian dynamics. 
%The , while local spin relaxation competes with the pumping process. reveal the interplay between fidelity, spectral gap, and dissipation strengths, 
The framework can be extended to larger cluster networks and can be implemented in trapped ion system with controllable dissipation. We propose an example for 2D square cluster state in our work and the current protocol is found to be robust in generating stabilized entanglement. Our projection based construction is efficient in driving the system dynamics towards the unique multipartite cluster state over the works~\cite{PhysRevE.73.016139,gonzalez2024tutorial} where such construction is performed at the level of deriving Master equation. Our demonstration of entanglement generation with asymptotic unit fidelity near the threshold dissipation advances the dissipative production of entangled resource states and computation.

\section*{Acknowledgment}
M.C. acknowledges the funding support from the FWO and the F.R.S.-FNRS as part of the Excellence of Science program (EOS project 40007526) at University of Liège, Belgium.

\appendix
\section{Construction of orthogonal subspace and Lindblad operators}
Let's construct the basis set of orthonormal subspace to the cluster state $|C_2\rangle$.
For $N=2$, the computational basis is given by
\begin{align}
    \{|00\rangle, |01\rangle, |10\rangle, |11\rangle \}
\end{align}
One convenient choice for orthonormal basis is 
\begin{align}
   \mathcal{H}_{\perp} =  \text{span}\{Z_1 |C_2\rangle, Z_2 |C_2\rangle, Z_1 Z_2 |C_2\rangle \}
\end{align}
Any $|\phi_m\rangle \in \mathcal{H}_{\perp}$ is orthogonal to the cluster state $|C_2\rangle$. The dimension of the orthogonal subspace is $3$. The choice is not unique and one can construct another basis set using Gram Schmidt orthonormalization process.

The construction of the Lindblad operators follows as,
\begin{align}
    L_1& = |C_2\rangle \langle \phi_1|\nonumber \\ 
    L_2& = |C_2\rangle \langle \phi_2|\nonumber \\
    L_3& = |C_2\rangle \langle \phi_3|
\end{align}
For any $N$, we can generalize the basis set for the orthogonal subspace as,
\begin{align}
    \mathcal{H}_{\perp} =  \text{span}\{ \prod_{m}  Z_m^s |C_N\rangle \}, s\in \{0,1 \}^N
\end{align}
with the dimension $2^{N}-1$.
\bibliographystyle{quantum}
\bibliography{ref}
\end{document}